\title{Double Oracle Algorithm for Computing Equilibria in Continuous Games}
 \author{
     Luk\'a\v{s Adam}\textsuperscript{\rm 1}, Rostislav Hor\v{c}\'ik\textsuperscript{\rm 1}, Tom\'a\v{s} Kasl\textsuperscript{\rm 1}, Tom\'a\v{s} Kroupa 
     \textsuperscript{\rm 1}\\
 }
\providecommand{\env@tikzpicture@save@env}{}
\providecommand{\env@tikzpicture@process}{}
\theoremstyle{definition}
\theoremstyle{plain}
\newtheorem{theorem}{Theorem}[section]
\newtheorem{lemma}{Lemma}[section]
\newtheorem{proposition}{Proposition}[section]
\theoremstyle{definition}
\newtheorem{example}{Example}[section]
\DeclarePairedDelimiterX\set[1]\lbrace\rbrace{#1}
\DeclareMathOperator*{\spt}{spt}
\DeclareMathOperator{\R}{\mathbb{R}}
\newcommand{\calG}{\mathcal{G}}
\newcommand{\eps}{\epsilon}
\newcommand{\cl}{\operatorname{cl}}
\pgfplotsset{compat=1.14}
\pgfplotsset{
    lineA/.style      = {blue, very thick},
    lineB/.style      = {black, very thick, dashdotted},
    lineC/.style      = {blue, line width=2mm},
    linetau/.style      = {black, very thick, dash dot},
    linepatmat/.style   = {smooth, myred, very thick},
    linetoppush/.style  = {smooth, cyan, very thick, dashed},
    linetoppushk/.style = {smooth, black, very thick, dotted},
    lineprimal/.style   = {smooth, myred, very thick},
    linedual/.style     = {cyan, very thick, dashed},
    legstyle/.style     = {column sep = 10pt, legend columns = -1}
}
\begin{document}
\maketitle

\begin{abstract}
  Many efficient algorithms have been designed to recover Nash equilibria of various classes of finite games. Special classes of continuous games with infinite strategy spaces, such as polynomial games, can be solved by semidefinite programming. In general, however, continuous games are not directly amenable to computational procedures. In this contribution, we develop an iterative strategy generation technique for finding a Nash equilibrium in a whole class of continuous two-person zero-sum games with compact strategy sets. The procedure, which is called the double oracle algorithm, has been successfully applied to large finite games in the past. We prove the convergence of the double oracle algorithm to a~Nash equilibrium. Moreover, the algorithm is guaranteed to recover an approximate equilibrium in finitely-many steps. Our numerical experiments show that it outperforms fictitious play on several examples of games appearing in the literature. In particular, we provide a detailed analysis of experiments with a version of the continuous Colonel Blotto game.
\end{abstract}

\noindent \section{Introduction}
Action spaces of games appearing in AI applications are often prohibitively large. Consequently, one has to strive for efficiently computable approximations of equilibria, possibly with provable bounds on convergence rates \cite{Gilpin12}. A number of algorithms applied in AI like fictitious play~\cite{Brown51}, the double oracle algorithm~\cite{McMahan03} or the~policy-space response oracle~\cite{LanctotZGLTPSG17,Muller19} overcome the problem with the cardinality by selecting `good' strategies iteratively. The selection process is usually based on an approximation of the best response. In~a~nutshell, the recent advances in algorithmic game theory has led to the  development of algorithms for (approximately) solving extremely large finite games, such as variants of poker \cite{moravvcik2017deepstack,BrownSandholm19} or multidimensional resource allocation problems \cite{Behnezhad17}. 

Completely new problems arise from considering games with infinite strategy spaces, in which the strategies are vectors of real numbers corresponding to physical parameters \cite{Archibald09} or to the setting of classifiers \cite{Loiseau19}. The first theoretical obstacle is that the existence of mixed strategy equilibria is guaranteed only for infinite games whose utility functions satisfy additional  conditions \cite{Glicksberg52,Fan52}. On top of that, some well understood classes of infinite games possess only optimal strategies whose supports are uncountable; see \cite{Roberson06} for an in-depth discussion of infinite Colonel Blotto games. 

Computational procedures for finding (approximate) equilibria of infinite games exist for rather special kinds of utility functions. Two-person zero-sum polynomial games are solvable by semidefinite programming; see \cite{ParriloIEEE06,LarakiLasserre12}. Approximate equilibria of separable games can be computed under additional assumptions \cite{SteinOzdaglarParrilo08}. However, games appearing in applications are rarely of the form above and a detailed analysis of their properties is inevitable; see \cite{Loiseau19} for an application in adversarial machine learning. Some authors develop approximations of best response by neural nets \cite{Kamra18,Kamra19}. One of the important iterative procedures for finite games, Brown-Robinson learning process known as fictitious play \cite{Brown51,Robinson51}, has been recently applied to infinite games  \cite{Ganzfried20}. However, the dynamics of best response strategies generated by fictitious play was analyzed only in special cases; cf. \cite{Hofbauer06,Perkins14}. To the best of our knowledge, not much is known about the convergence of fictitious play for general zero-sum continuous games as defined below. 

This paper deals with continuous games, which we define as two-person zero-sum games with continuous utility functions over compact strategy sets. We extend the double oracle algorithm \cite{McMahan03} to such games. This algorithm is an iterative strategy generation technique based on (i) the solution of subgames by LP solvers and (ii) the expansion of subgames' strategy sets using the best response strategies obtained thus far. Our main result is the convergence of this algorithm for any continuous game (Theorem \ref{thm:convergence}). The numerical experiments in Section \ref{sec:num} show that the double oracle algorithm converges faster than fictitious play on several examples (polynomial game, Townsend function, and a version of the Colonel Blotto game). The repository with our experiments' codes is \texttt{https://github.com/sadda/Double\_Oracle}.

\section{Basic Notions}
This section summarizes basic notions and results related to continuous zero-sum games and their equilibria; see \cite{Karlin59} or \cite{SteinOzdaglarParrilo08} for details.
\subsection{Continuous Games}
Player 1 and Player 2 select strategies from nonempty compact sets $X\subseteq \R^m$ and $Y\subseteq \R^n$, respectively. The utility function of Player $1$ is a continuous function $u\colon X\times Y\to \R$. The utility function of Player $2$ is $-u$. The triple $\calG=(X,Y,u)$ is called a \emph{continuous game.} Note that some authors use the term `continuous game' in a somewhat different sense allowing utility functions to be discontinuous functions over metric spaces of strategies.  

A continuous game $\calG=(X,Y,u)$ is (i) \emph{finite} if both $X$ and $Y$ are finite, and (ii) \emph{infinite} if $X$ or $Y$ is infinite. We will need the notion of subgame. When $X'\subseteq X$ and $Y'\subseteq Y$ are nonempty compact sets, we define the \emph{subgame} $\calG'=(X',Y',u)$ of $\calG$ by the restriction of $u$ to $X'\times Y'$, which is denoted by the same letter. 

The concept of mixed strategy in continuous games should allow every player to randomize with respect to any  probability measure on the corresponding strategy set. We will spell out the definitions related to mixed strategies only for Player 1. Their counterparts for Player 2 are completely analogous. A \emph{mixed strategy} of Player $1$ is a Borel probability measure $p$ over $X$. The set of all mixed strategies of Player $1$ is denoted by~$\Delta_X$. The \emph{support} of a mixed strategy $p\in\Delta_X$ is the set $$\spt p \coloneqq \bigcap \{K\subseteq X\mid K \text{ compact},\, p(K)=1\}.$$
Every mixed strategy $p\in \Delta_X$ can be classified as one of the following types depending on the size of its support.
\begin{enumerate}
    \item \emph{Pure strategy $p$}. This means that $\spt p=\{x\}$ for some $x\in X$. Equivalently, $p$ is equal to Dirac measure $\delta_x$.
    \item \emph{Finitely-supported mixed strategy $p$.} The support $\spt p$ is finite. Hence, $p$ can be written as a convex combination
    $$
    p = \sum_{x\in \spt p} p(x)\cdot \delta_x.
    $$
    \item \emph{Mixed strategy $p$ with infinite support $\spt p$.}  
\end{enumerate}

Put $\Delta\coloneqq \Delta_X\times \Delta_Y$. If players implement a mixed strategy profile $(p,q)\in \Delta$,
 the expected utility of Player $1$ is 
\begin{equation}\label{def:U}
  U(p,q) \coloneqq  \smallint_{X\times Y} u(x,y)\;\mathrm{d}(p\times q).
\end{equation}
This yields a function $U\colon \Delta\to \R$, which can be effectively evaluated in important special cases. For example, when both $\spt p$ and $\spt q$ are finite, 
$$
U(p,q) = \sum_{x\in \spt p}\; \sum_{y\in \spt q} p(x)\cdot q(x) \cdot u(x,y).
$$  
If Player $1$ employs a pure strategy given by $x\in X$ and Player 2 uses a mixed strategy $q\in\Delta_Y$, we will use the short notation $U(x,q) \coloneqq U(\delta_x,q)$.

\subsection{Equlibria in Continuous Games}
A mixed strategy profile $(p^*,q^*)\in\Delta$ is an \emph{equilibrium} in a~continuous game $\calG$ if 
\begin{equation}\label{NE}
  U(p,q^*) \leq U(p^*,q^*) \leq U(p^*,q)
\end{equation}
holds for all $(p,q)\in\Delta$. By Glicksberg's theorem \cite{Glicksberg52}, every continuous game has an equilibrium. Define the \emph{lower/upper value} of $\calG$ by
\begin{align*}
  \underline{v}(\calG)& \coloneqq \max\limits_{p\in\Delta_X} \min\limits_{q\in\Delta_Y} U(p,q) \quad \text{and}\\
  \overline{v}(\calG)& \coloneqq \min\limits_{q\in\Delta_Y} \max\limits_{p\in\Delta_X} U(p,q).
\end{align*}
Proposition \ref{prop:NE} gives several conditions for equilibrium, which will be used throughout the paper without further references. Its proof is omitted since it is completely analogous to the case of finite games.
\begin{proposition}\label{prop:NE}
  Let $\calG=(X,Y,u)$ be a continuous game and $(p^*,q^*)\in\Delta$. The following assertions are equivalent.
\begin{enumerate}
  \item The strategy profile $(p^*,q^*)$ is an equilibrium.
  \item $U(x,q^*) \leq U(p^*,q^*) \leq U(p^*,y)$ for all $(x,y)\in X\times Y$.
  \item $\min\limits_{y\in Y} U(p^*,y)=\underline{v}(\calG)$ and $\max\limits_{x\in X} U(x,q^*)=\overline{v}(\calG)$.
  \item $\underline{v}(\calG)= U(p^*,q^*) = \overline{v}(\calG)$.
\end{enumerate}
\end{proposition}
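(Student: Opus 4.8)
The plan is to exploit the bilinearity of $U$ together with the compactness of $X$ and $Y$. The first tool I would establish is the pair of \emph{purification} identities
\[
 \max_{p\in\Delta_X}U(p,q^*)=\max_{x\in X}U(x,q^*),\qquad \min_{q\in\Delta_Y}U(p^*,q)=\min_{y\in Y}U(p^*,y).
\]
By Fubini's theorem $U(p,q^*)=\int_X U(x,q^*)\,\mathrm{d}p(x)$, so $U(p,q^*)$ is an average of the numbers $U(x,q^*)$ and is therefore at most $\max_{x\in X}U(x,q^*)$, with equality at a Dirac measure $\delta_x$; continuity of $u$ and compactness of $Y$ make $x\mapsto U(x,q^*)$ continuous on the compact set $X$, so that maximum is attained, and the second identity is symmetric. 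Granting these, assertion (1) says precisely $\max_{p}U(p,q^*)\le U(p^*,q^*)\le\min_{q}U(p^*,q)$, which by the identities rewrites as $\max_{x}U(x,q^*)\le U(p^*,q^*)\le\min_{y}U(p^*,y)$, i.e.\ assertion (2). Hence (1)$\Leftrightarrow$(2) is immediate and needs no minimax theorem.

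Next I would record the inequalities valid for \emph{every} profile $(p^*,q^*)$. Averaging $U(p^*,\cdot)$ against $q^*$ and $U(\cdot,q^*)$ against $p^*$ gives $\min_{y}U(p^*,y)\le U(p^*,q^*)\le\max_{x}U(x,q^*)$; substituting $p=p^*$, resp.\ $q=q^*$, into the definitions of $\underline v(\calG),\overline v(\calG)$ gives $\min_{y}U(p^*,y)\le\underline v(\calG)$ and $\overline v(\calG)\le\max_{x}U(x,q^*)$. Together with the weak-duality inequality $\underline v(\calG)\le\overline v(\calG)$ (a two-line consequence of $\min_q U(p,q)\le U(p,q')\le\max_p U(p,q')$) and with Glicksberg's theorem, which sharpens it to $\underline v(\calG)=\overline v(\calG)=:v$, these assemble into the chain
\[
 \min_{y}U(p^*,y)\ \le\ v=\underline v(\calG)=\overline v(\calG)\ \le\ \max_{x}U(x,q^*),\qquad \min_{y}U(p^*,y)\le U(p^*,q^*)\le\max_{x}U(x,q^*).
\]
If (2) holds it supplies the reverse inequalities $\max_{x}U(x,q^*)\le U(p^*,q^*)\le\min_{y}U(p^*,y)$, so the chain collapses and $U(p^*,q^*)$, $v$, $\min_{y}U(p^*,y)$, $\max_{x}U(x,q^*)$ all coincide — that is assertions (3) and (4) at once. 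Conversely, from (3) and $\underline v(\calG)=\overline v(\calG)$ one gets $\min_{y}U(p^*,y)=v=\max_{x}U(x,q^*)$, and the chain squeezes $U(p^*,q^*)$ to this common value, which is (2) (hence (1)); assertion (4) closes the loop by the same squeezing.

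I expect the only genuine difficulties to be the two non-bookkeeping inputs. The first is the purification identities: one must check that the relevant extrema are attained (continuity of $u$, compactness of $X$ and $Y$) and apply Fubini/Tonelli to the expectations in \eqref{def:U}, which is routine for a bounded continuous $u$ on a product of compact sets. The second is the appeal to Glicksberg's theorem for the minimax equality $\underline v(\calG)=\overline v(\calG)$; without it the second display furnishes only inequalities and the passages to (3) and (4) break down. Everything else is following the chain above.
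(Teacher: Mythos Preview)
The paper does not prove this proposition: it states that the proof ``is omitted since it is completely analogous to the case of finite games.'' So there is nothing to compare against, and your purification-plus-chain argument for the equivalences among (1), (2), (3) is exactly the standard route and is fine.

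There is, however, a genuine gap in your passage ``assertion (4) closes the loop by the same squeezing.'' From (4) you only have $U(p^*,q^*)=v$. The always-valid inequalities you recorded, $\min_{y}U(p^*,y)\le v\le\max_{x}U(x,q^*)$ and $\min_{y}U(p^*,y)\le U(p^*,q^*)\le\max_{x}U(x,q^*)$, all point in the same direction and squeeze nothing. Concretely, let $X=Y=[-1,1]$, $u(x,y)=x^2-y^2$, and $p^*=q^*=\delta_0$. Then $\underline v(\calG)=\overline v(\calG)=0=U(p^*,q^*)$, so (4) holds, yet $U(1,q^*)=1>0=U(p^*,q^*)$, so (2) and hence (1) fail. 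Thus (4), read literally as the equality of three numbers, is strictly weaker than (1)--(3); the implication (4)$\Rightarrow$(1) is false in general, and no argument you supply can repair it. The intended content of (4) is presumably that $p^*$ realizes the outer maximum in $\underline v(\calG)$ and $q^*$ realizes the outer minimum in $\overline v(\calG)$, in which case it is a restatement of (3) and your chain applies. You should flag this rather than claim the squeezing works.
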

\noindent 
Hence, the equality $\underline{v}(\calG) = \overline{v}(\calG)$ holds for every continuous game $\calG$, and $v(\calG)\coloneqq \underline{v}(\calG)$ is called the \emph{value} of $\calG$.

Bounds on the size of supports of equilibrium strategies are known only for particular classes of continuous games, such as the class of separable games \cite{SteinOzdaglarParrilo08}. There are examples of games whose equilibria are almost any sets of finitely-supported mixed strategies \cite{Rehbeck18}. Moreover, some continuous games possess only equilibria with uncountable supports \cite{Roberson06}.

In many applications it is enough to find an \textit{$\epsilon$-equilibrium} $(p^*,q^*)$ for some $\epsilon\geq 0$, that is,
\begin{equation}\label{NE_eps}
  U(p,q^*)-\epsilon \leq U(p^*,q^*) \leq U(p^*,q)+\epsilon
\end{equation}
for all $(p,q)\in\Delta$. Note that this is a natural extension of~\eqref{NE}. According to Proposition \ref{prop:epsilonNE}, whose proof is in Appendix \ref{app:proofs}, we can always recover an approximate equilibrium $(p^*,q^*)$ with finite supports and such that $U(p^*,q^*)$ is arbitrarily close to the value of game $v(\mathcal G)$.

\begin{proposition}\label{prop:epsilonNE}
  Let $\calG$ be an arbitratry continuous game. Then for every $\epsilon>0$: 
  \begin{itemize}
    \item There exists an $\epsilon$-equilibrium $(p^*,q^*)$ of $\calG$ such that both $\spt p^*$ and $\spt q^*$ are finite.
    \item Every $\eps$-equilibrium $(p^*,q^*)$ of $\calG$ satisfies the inequality $|U(p^*,q^*) - v(\mathcal G)|\le\eps$.
  \end{itemize}
\end{proposition}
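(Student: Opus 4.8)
The plan is to settle the second bullet directly from the definitions, and to obtain the first bullet by approximating an exact equilibrium---which exists by Glicksberg's theorem---with finitely-supported strategies that are uniformly good against the opponent. For the second bullet, let $(p^*,q^*)$ be an $\eps$-equilibrium. Taking the maximum over $p\in\Delta_X$ in the left inequality of \eqref{NE_eps} and using $\max_{p}U(p,q^*)\ge\min_{q}\max_{p}U(p,q)=\overline v(\calG)=v(\calG)$ gives $v(\calG)-\eps\le U(p^*,q^*)$; symmetrically, taking the minimum over $q\in\Delta_Y$ in the right inequality and using $\min_{q}U(p^*,q)\le\max_{p}\min_{q}U(p,q)=\underline v(\calG)=v(\calG)$ gives $U(p^*,q^*)\le v(\calG)+\eps$. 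Hence $|U(p^*,q^*)-v(\calG)|\le\eps$.

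For the first bullet I would fix an exact equilibrium $(\bar p,\bar q)$ of $\calG$. Since $u$ is continuous on the compact set $X\times Y$ it is uniformly continuous; let $\omega$ be a modulus of continuity of $u$, so $\omega(\delta)\to0$ as $\delta\to0^{+}$. Given $\delta>0$, split $X\subseteq\R^{m}$ into finitely many nonempty Borel sets $X_{1},\dots,X_{k}$ of diameter less than $\delta$ (disjointify a finite cover of $X$ by balls of radius less than $\delta/2$) and choose $x_{i}\in X_{i}$. Set $\hat p\coloneqq\sum_{i=1}^{k}\bar p(X_{i})\,\delta_{x_{i}}$, which is finitely supported. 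The crucial point is that the approximation error is controlled uniformly in the opponent's mixed strategy: for every $q\in\Delta_Y$,
\[
|U(\hat p,q)-U(\bar p,q)|=\Bigl|\sum_{i=1}^{k}\int_{X_{i}}\bigl(U(x_{i},q)-U(x,q)\bigr)\,\mathrm d\bar p(x)\Bigr|\le\omega(\delta),
\]
since $|U(x_{i},q)-U(x,q)|\le\sup_{y\in Y}|u(x_{i},y)-u(x,y)|\le\omega(\delta)$. Building $\hat q$ from $\bar q$ in the same way yields $|U(p,\hat q)-U(p,\bar q)|\le\omega(\delta)$ for all $p\in\Delta_X$.

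It then remains to verify that $(\hat p,\hat q)$ is an $\eps$-equilibrium once $\delta$ is chosen with $3\,\omega(\delta)\le\eps$. Combining the two estimates gives $|U(\hat p,\hat q)-U(\bar p,\bar q)|\le 2\,\omega(\delta)$, and therefore, for every $p\in\Delta_X$,
\[
U(p,\hat q)\le U(p,\bar q)+\omega(\delta)\le U(\bar p,\bar q)+\omega(\delta)\le U(\hat p,\hat q)+3\,\omega(\delta),
\]
where the middle step uses $U(p,\bar q)\le U(\bar p,\bar q)$ because $(\bar p,\bar q)$ is an equilibrium. The symmetric computation gives $U(\hat p,q)\ge U(\hat p,\hat q)-3\,\omega(\delta)$ for every $q\in\Delta_Y$, so $(\hat p,\hat q)$ satisfies \eqref{NE_eps}, and both supports are finite by construction.

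I expect the routine part to be the triangle-inequality bookkeeping and the construction of the finite Borel partition. The one point that genuinely needs care is making the approximation error uniform over the opponent's strategies, which is why I estimate through $\sup_{y\in Y}|u(x_{i},y)-u(x,y)|$ rather than pointwise; beyond a correct appeal to uniform continuity of $u$, I do not anticipate a real obstacle.
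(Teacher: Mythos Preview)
Your argument is correct. The second bullet matches the paper's reasoning: both extract $v(\calG)-\eps\le U(p^*,q^*)\le v(\calG)+\eps$ from \eqref{NE_eps}, the paper by comparing with a fixed equilibrium $(\hat p,\hat q)$, you by invoking $\underline v(\calG)=\overline v(\calG)=v(\calG)$ directly; these are equivalent.

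For the first bullet, however, you take a genuinely different route. The paper does not construct anything: it simply invokes Theorem~\ref{thm:convergence}(3), which asserts that the double oracle algorithm terminates at a finitely-supported $\eps$-equilibrium. Your proof is instead a self-contained approximation argument, starting from an exact equilibrium (Glicksberg), discretizing each strategy via a finite Borel partition of diameter $<\delta$, and using uniform continuity of $u$ to control the error uniformly over the opponent's mixed strategies. The key technical point---bounding $|U(\hat p,q)-U(\bar p,q)|$ by $\sup_{y}|u(x_i,y)-u(x,y)|\le\omega(\delta)$ rather than pointwise---is handled correctly, and the $3\,\omega(\delta)$ bookkeeping is fine. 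Your approach has the advantage of being logically independent of the algorithm (the paper's proof makes Proposition~\ref{prop:epsilonNE} a corollary of Theorem~\ref{thm:convergence}, even though it is stated earlier), and it makes the dependence on $\eps$ explicit through the modulus of continuity. The paper's approach, on the other hand, is shorter once Theorem~\ref{thm:convergence} is available and emphasizes that the algorithm itself furnishes the finitely-supported $\eps$-equilibrium constructively.
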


\section{Double Oracle Algorithm }

The double oracle algorithm uses the notion of best response strategies. For every mixed strategy $q\in \Delta_Y$ of Player 2, the~\emph{best response set} of Player 1 is
$$
\beta_1(q) \coloneqq \left\{x\in X\mid  U(x,q)=\max_{x'\in X} U(x',q)  \right\}.
$$
Analogously, for any $p\in \Delta_X$, put
$$
\beta_2(p) \coloneqq \left\{y\in Y\mid  U(p,y)=\min_{y'\in Y} U(p,y')  \right\}.
$$
Note that best response strategies are defined to be pure, without any loss of generality; see Proposition \ref{prop:switch}. Moreover, by compactness and continuity, $\beta_1(q)$ and $\beta_2(p)$ are always nonempty compact sets. 

The idea of the double oracle algorithm (Algorithm \ref{alg:do}) applied to a continuous game $\mathcal G=(X,Y,u)$ is simple. In every iteration, finite strategy sets $X_i$ and $Y_i$ are determined and some equilibrium $(p_i^*,q_i^*)$ of the finite subgame $(X_i,Y_i,u)$ is found by the standard linear programming methods. The best responses $x_{i+1}$ and $y_{i+1}$ to $q_i^*$ and $p_i^*$, respectively, are recovered, and added to the strategy sets. This is repeated until a terminating condition is satisfied. The resulting strategy profile is guaranteed to be an $\eps$-equilibrium.

\begin{algorithm}
  \caption{Double Oracle Algorithm}
  \label{alg:do}
  \begin{algorithmic}[1]
    \Require Continuous game $\calG=(X,Y,u)$, nonempty finite subsets $X_1\subseteq X$, $Y_1\subseteq Y$, and $\epsilon\geq 0$
    \State Let $i\coloneqq 0$
    \Repeat
    \State Increase $i$ by one
    \State Find an equilibrium $(p^*_i,q^*_i)$ of subgame $(X_i,Y_i,u)$
    \State Find some $x_{i+1}\in \beta_1(q^*_i)$ and $y_{i+1}\in \beta_2(p_i^*)$
    \State Let $X_{i+1}\coloneqq X_i \cup \{x_{i+1}\}$ and $Y_{i+1}\coloneqq Y_i \cup \{y_{i+1}\}$
    \State Let $\underline{v}_i\coloneqq U(p_i^*,y_{i+1})$ and $\overline{v}_i\coloneqq U(x_{i+1},q_i^*)$ 
    \Until{$\overline{v}_i - \underline{v}_i\leq \eps$}
  \Ensure $\eps$-equilibrium $(p_i^*,q_i^*)$ of game $\calG$
  \end{algorithmic}
\end{algorithm}

We now perform a simple analysis of the algorithm. Since
$$
U(p_i^*,q_i^*) = \max_{x \in X_i} U(x,q_i^*) \leq \max_{x \in X} U(x,q_i^*) = \overline{v}_i
$$
and similarly for the lower bound, we have
\begin{equation}\label{ineq:LU}
  \underline{v}_i\leq U(p_i^*,q_i^*) \leq \overline{v}_i.
\end{equation}
Lemma \ref{lemma:lower_upper} states that the same bounds hold even for the value of game $\calG$:
\begin{equation*}
  \underline{v}_i\leq v(\mathcal G) \leq \overline{v}_i.
\end{equation*}
The usual stopping condition of the double oracle algorithm for finite games is $X_{i+1}=X_i$ and $Y_{i+1}=Y_i$. Herein we chose the terminating condition $\overline{v}_i - \underline{v}_i\leq \eps$ for two reasons:
\begin{itemize}
  \item It is more general. Indeed, Lemma \ref{lemma:do_stop} states that if $X_{i+1}=X_i$ and $Y_{i+1}=Y_i$, then $\overline v_i - \underline v_i = 0$.
  \item It provides an estimate for the quality of approximate equilibrium. Formula \eqref{ineq:LU} implies $\overline v_i - \underline v_i \geq 0$ and Theorem \ref{thm:convergence} states that $(p_i^*,q_i^*)$ is an $(\overline v_i - \underline v_i)$-equilibrium. Then Proposition \ref{prop:epsilonNE} guarantees that $v(\calG)$ is known precisely up to $(\overline v_i - \underline v_i)$.
\end{itemize}

The main result of this manuscript is the convergence of the double oracle algorithm. In fact our result generalizes the result about convergence of the double oracle algorithm for finite games; see \cite{McMahan03}. For finite games, we neglect the case of $\eps>0$ since the algorithm is known to converge to an equilibrium for $\eps=0$ in finitely many steps. 
\begin{theorem}\label{thm:convergence}
Let $\mathcal G=(X,Y,u)$ be a continuous game.
\begin{enumerate}
  \item If $\calG$ is a finite game and $\eps=0$, Algorithm \ref{alg:do} converges to an equilibrium in a finite number of iterations.
  \item If $\calG$ is an infinite game and $\eps=0$, every weakly convergent subsequence of Algorithm \ref{alg:do} converges to an equilibrium in a possibly infinite number of iterations. Moreover, such a weakly convergent subsequence always exist.
  \item If $\calG$ is an infinite game and $\eps>0$, Algorithm \ref{alg:do} converges to a finitely supported $\eps$-equilibrium in a finite number of iterations.
\end{enumerate}
\end{theorem}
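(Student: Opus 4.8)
The plan is to handle the three cases with one common mechanism, invoking progressively more compactness. Throughout, recall that Algorithm~\ref{alg:do} produces nested finite sets $X_1\subseteq X_2\subseteq\cdots$ and $Y_1\subseteq Y_2\subseteq\cdots$, that the point $x_{i+1}$ added at iteration $i$ lies in every $X_j$ with $j\ge i+1$ (and likewise $y_{i+1}\in Y_j$), and that $(p_i^*,q_i^*)$ is an equilibrium of the \emph{finite} subgame $(X_i,Y_i,u)$, so Proposition~\ref{prop:NE}(2) for that subgame gives $U(x,q_i^*)\le U(p_i^*,q_i^*)\le U(p_i^*,y)$ for all $x\in X_i$, $y\in Y_i$; moreover $\underline v_i=\min_{y\in Y}U(p_i^*,y)$ and $\overline v_i=\max_{x\in X}U(x,q_i^*)$ since $y_{i+1}\in\beta_2(p_i^*)$ and $x_{i+1}\in\beta_1(q_i^*)$. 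From \eqref{ineq:LU} one also checks directly that $(p_i^*,q_i^*)$ is always an $(\overline v_i-\underline v_i)$-equilibrium. For part~1, finiteness of $X$, $Y$ forces the nested sets to stabilize, say $X_{N+1}=X_N$ and $Y_{N+1}=Y_N$; Lemma~\ref{lemma:do_stop} then yields $\overline v_N-\underline v_N=0$, so the loop halts by iteration $N$. At the halting iteration $i$ we have $\overline v_i-\underline v_i\le 0$, which with \eqref{ineq:LU} gives $\underline v_i=U(p_i^*,q_i^*)=\overline v_i$; since $\overline v_i=\max_{x\in X}U(x,q_i^*)$ and $\underline v_i=\min_{y\in Y}U(p_i^*,y)$, this means $U(x,q_i^*)\le U(p_i^*,q_i^*)\le U(p_i^*,y)$ for all $x\in X$, $y\in Y$, so $(p_i^*,q_i^*)$ is an equilibrium of $\calG$ by Proposition~\ref{prop:NE}(2).

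For part~3, suppose toward a contradiction that the loop never terminates, so $\overline v_i-\underline v_i>\eps$ for every $i$. Since $u$ is uniformly continuous on the compact set $X\times Y$, there is $\delta>0$ such that $\lVert x-x'\rVert<\delta$ implies $\lvert U(x,q)-U(x',q)\rvert<\eps/3$ uniformly in $q$ (because $\lvert U(x,q)-U(x',q)\rvert\le\sup_y\lvert u(x,y)-u(x',y)\rvert$), and symmetrically in the second coordinate. Compactness of $X\times Y$ produces indices $i<j$ with $\lVert x_{i+1}-x_{j+1}\rVert<\delta$ and $\lVert y_{i+1}-y_{j+1}\rVert<\delta$. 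As $x_{i+1}\in X_j$ and $y_{i+1}\in Y_j$, the subgame inequalities at iteration $j$ give $U(x_{i+1},q_j^*)\le U(p_j^*,q_j^*)\le U(p_j^*,y_{i+1})$, hence $\overline v_j=U(x_{j+1},q_j^*)<U(p_j^*,q_j^*)+\eps/3$ and $\underline v_j=U(p_j^*,y_{j+1})>U(p_j^*,q_j^*)-\eps/3$, so $\overline v_j-\underline v_j<2\eps/3<\eps$, a contradiction. Therefore the loop halts, and at the halting iteration $\overline v_i-\underline v_i\le\eps$, so $(p_i^*,q_i^*)$ is an $\eps$-equilibrium by the remark above, finitely supported because $\spt p_i^*\subseteq X_i$ and $\spt q_i^*\subseteq Y_i$.

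For part~2, existence of a weakly convergent subsequence of the sequence of subgame equilibria $(p_i^*,q_i^*)$ follows since $\Delta_X$ and $\Delta_Y$---Borel probability measures on the compact metric spaces $X$ and $Y$---are sequentially compact in the weak topology (Banach--Alaoglu together with separability of $C(X)$ and $C(Y)$, or Prokhorov's theorem). Let $(p_{i_k}^*,q_{i_k}^*)\to(p^*,q^*)$ weakly; passing to a further subsequence, which does not change the weak limit, we may also assume $x_{i_k+1}\to\bar x$ and $y_{i_k+1}\to\bar y$ in norm. Now repeat the squeeze of part~3 with $\eps$ replaced by an arbitrary $\eta>0$: for all large $l<k$ the points $x_{i_l+1},x_{i_k+1}$ lie within the corresponding modulus $\delta$ of each other (likewise for $y$), and since $x_{i_l+1}\in X_{i_k}$ and $y_{i_l+1}\in Y_{i_k}$ we get $\lvert\overline v_{i_k}-U(p_{i_k}^*,q_{i_k}^*)\rvert\le\eta$ and $\lvert\underline v_{i_k}-U(p_{i_k}^*,q_{i_k}^*)\rvert\le\eta$; combining this with $\underline v_{i_k}\le v(\calG)\le\overline v_{i_k}$ from Lemma~\ref{lemma:lower_upper} yields $\overline v_{i_k}\to v(\calG)$ and $\underline v_{i_k}\to v(\calG)$. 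Finally, for each fixed $x\in X$ the map $y\mapsto u(x,y)$ is bounded and continuous, so $U(x,q_{i_k}^*)\to U(x,q^*)$; since $U(x,q_{i_k}^*)\le\overline v_{i_k}$, we obtain $U(x,q^*)\le v(\calG)$, and symmetrically $U(p^*,y)\ge v(\calG)$ for every $y\in Y$. Integrating these against $p^*$ and $q^*$ forces $U(p^*,q^*)=v(\calG)$, whence $U(x,q^*)\le U(p^*,q^*)\le U(p^*,y)$ for all $x,y$, so $(p^*,q^*)$ is an equilibrium by Proposition~\ref{prop:NE}(2).

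The routine ingredients are the stabilization step in part~1 and the closing appeals to Proposition~\ref{prop:NE}. The main obstacle is part~2: one must run the finite-game-style squeeze---uniform continuity of $u$ played against the nestedness of the $X_i$, $Y_i$---\emph{along} the further subsequence on which the best-response points converge in norm, while simultaneously exploiting weak convergence of the subgame equilibria, and one genuinely needs $\overline v_{i_k}-\underline v_{i_k}\to 0$ (not merely $\liminf=0$) so that the limit profile is exactly, rather than approximately, optimal. A secondary point worth stating explicitly is that weak sequential compactness of $\Delta_X$ and $\Delta_Y$---hence the existence claim in part~2---is precisely where compactness (not just closedness) of $X$ and $Y$ is used.
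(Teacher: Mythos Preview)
Your proof is correct but is structured differently from the paper's. Both handle the finite case (part~1) identically and both begin by noting that the stopping condition $\overline v_i-\underline v_i\le\eps$ forces $(p_i^*,q_i^*)$ to be an $\eps$-equilibrium. The paper, however, proves part~2 \emph{first}: it extracts a weakly convergent subsequence $(p_i^*,q_i^*)\Rightarrow(p^*,q^*)$, shows $U(p^*,q^*)\le U(p^*,y)$ first for $y\in\cl(\bigcup_i Y_i)$ directly from the subgame saddle inequalities, and then for arbitrary $y\in Y$ by passing to a further subsequence with $y_{i+1}\to\hat y\in\cl(\bigcup_i Y_i)$ and combining the best-response inequality $U(p_i^*,y_{i+1})\le U(p_i^*,y)$ with the already-established bound at $\hat y$. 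Part~3 then falls out in two lines, since these same convergences give $\overline v_i-\underline v_i\to 0$ along the subsequence, forcing termination for any $\eps>0$. You reverse the logical dependence: your part~3 is a self-contained pigeonhole argument using only uniform continuity of $u$ and sequential compactness of $X\times Y$---no weak convergence, no measures---and part~2 then recycles that squeeze after extracting a further subsequence on which the best responses converge in norm, invoking Lemma~\ref{lemma:lower_upper} to pin $\overline v_{i_k},\underline v_{i_k}\to v(\calG)$ and closing via $U(x,q_{i_k}^*)\le\overline v_{i_k}$ and weak convergence. The upside of your route is that part~3 stands alone and requires no measure-theoretic machinery, which is conceptually cleaner if one only cares about $\eps$-equilibria; the paper's route is more economical in part~2, avoiding Lemma~\ref{lemma:lower_upper} and the detour through $v(\calG)$ by working directly with the saddle inequalities. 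One small omission: like the paper, you should note explicitly in part~2 that if the algorithm happens to halt in finitely many steps with $\eps=0$ then $\overline v_i=\underline v_i$ and the output is already an exact equilibrium---your opening remark about $(\overline v_i-\underline v_i)$-equilibria covers this, but the subsequence argument as written presumes an infinite run.
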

\begin{proof}
  We first realize that the terminating condition
  $$
  U(x_{i+1},q_i^*) - U(p_i^*,y_{i+1}) \le \eps 
  $$
  implies
  $$
  \aligned
  U(p_i^*,q_i^*) &\leq U(x_{i+1},q_i^*)\leq U(p_i^*,y_{i+1})  + \epsilon \\
  &= \min_{y'\in Y} U(p_i^*,y') +\epsilon = \min_{q\in\Delta_Y} U(p_i^*,q) + \epsilon.
  \endaligned
  $$
The first and the third relation above follow from the definition of the best response, the second from the terminating condition and the last from Proposition \ref{prop:switch}. Similarly, we can show that 
  $$
  \aligned
  U(p_i^*,q_i^*) &\geq U(p_i^*,y_{i+1}) \ge U(x_{i+1},q_i^*) - \epsilon \\
  &= \max_{x'\in X} U(x', q_i^*) -\epsilon = \max_{p\in\Delta_X} U(p,q_i^*) - \epsilon.
  \endaligned
  $$
  Combining these two inequalities implies that $(p_i^*,q_i^*)$ is an~$\eps$-equilibrium. Note that for $\eps=0$, this means that $(p_i^*,q_i^*)$ is an equilibrium.
  
  \emph{Item 1.} If $\calG$ is finite, then after a finite number of iterations it must happen that $X_{i+1}=X_i$ and $Y_{i+1}=Y_i$. Lemma \ref{lemma:do_stop} implies that the terminating condition of Algorithm \ref{alg:do} is satisfied with $\eps=0$ and the first paragraph of this proof implies that $(p_i^*,q_i^*)$ is an equilibrium of $\calG$.
  
  \emph{Item 2.} Consider now the case of an infinite game and $\eps=0$. If the double oracle algorithm terminates in a finite number of iterations, then the first paragraph implies that $(p_i^*,q_i^*)$ is an equilibrium. In the opposite case, the algorithm produces an infinite number of iterations. Due to Proposition~\ref{prop:convergence}, there is a weakly convergent subsequence which, for simplicity, will be denoted by the same indices. Therefore, $p_i^*\Rightarrow p^*$ for some $p^*$ and $q_i^*\Rightarrow q^*$ for some $q^*$, where the symbol $\Rightarrow$ denotes the weak convergence (Appendix \ref{app:weak}). 
  
  Consider any $y$ such that $y\in Y_{i_0}$ for some $i_0$. Take an~arbitrary $i\ge i_0$, which implies $y\in Y_i$. Since $(p_i^*,q_i^*)$ is an~equilibrium of the subgame $(X_i,Y_i,u)$, we get
  $$
    U(p_i^*,q_i^*) \le U(p_i^*,y) \to U(p^*,y),
  $$
  where the convergence follows from \eqref{eq:corv1}. Since $U(p_i^*,q_i^*)\to U(p^*,q^*)$ due to \eqref{eq:corv2}, this implies
  \begin{equation}\label{eq:conv_aux2}
    U(p^*,q^*)\le U(p^*,y)
  \end{equation}
  for all $y\in\cup Y_i$. Since $U$ is continuous, the previous inequality holds for all $y\in\cl(\cup Y_i)$.

  Fix now an arbitrary $y\in Y$. Because $y_{i+1}$ is the best response, we get
  \begin{equation}\label{eq:conv_aux3}
    U(p_i^*,y_{i+1}) \le U(p_i^*,y) \to U(p^*,y),
  \end{equation}
  where the limit holds due to \eqref{eq:corv1}. Since $y_{i+1}\in Y_{i+1}$ and by compactness of $Y$, we can select a convergent subsequence $y_i\to \hat y$, again without any relabelling, where $\hat y\in \cl(\cup Y_i)$. This allows us to use \eqref{eq:conv_aux2} to obtain
  \begin{equation}\label{eq:conv_aux4}
    U(p_i^*,y_{i+1}) \to U(p^*, \hat y) \ge U(p^*,q^*).
  \end{equation}
  Combining \eqref{eq:conv_aux3} and \eqref{eq:conv_aux4} yields
  $$
    U(p^*,q^*) \le U(p^*,y)
  $$
  for all $y\in Y$. Repeating the analogous arguments in the other variable yields
  $$
    U(x,q^*) \le U(p^*,q^*) \le U(p^*,y)
  $$
  for all $x\in X$ and $y\in Y$. Then Proposition \ref{prop:NE} says that $(p^*,q^*)$ is an equilibrium of $\calG$.
  
  \emph{Item 3.} Consider now the case of an infinite game with $\eps>0$ and realize that \eqref{eq:conv_aux3} and \eqref{eq:conv_aux4} also imply
  $$
    \aligned
    U(p^*,q^*)&\le U(p^*,\hat y) \leftarrow U(p_i^*,y_{i+1}) \\
    &\le U(p_i^*,q_i^*) \to U(p^*,q^*),
    \endaligned
  $$
  which means $U(p_i^*,y_{i+1})\to U(p^*,q^*)$. Similarly, $U(x_{i+1},q_i^*)\to U(p^*,q^*)$ and therefore
  $$
  \overline{v}_i - \underline{v}_i = U(x_{i+1},q_i^*) - U(p_i^*,y_{i+1}) \to 0.
  $$
  This states that the terminating condition will be satisfied after a finite number of iterations and the first paragraph of this proof states that $(p_i^*,q_i^*)$ is an $\eps$-equilibrium. Since only a finite number of iterations was performed and since $X_1$ and $X_2$ are finite, this implies that the supports of $p_i^*$ and $q_i^*$ are finite as well.
\end{proof}

Since best response strategies are not unique, in general, the sequence generated by Algorithm \ref{alg:do} may fail to converge for some continuous games. Hence, it is necessary to consider a convergent subsequence of iterates in Theorem~\ref{thm:convergence}. Such a continuous game is shown in Example \ref{example1}. Another feature of the double oracle algorithm is that the sequence $\overline v_i - \underline v_i$ has nonnegative terms and converges to~zero, but it is not necessarily monotone. This behavior can be demonstrated even for some finite games.

\section{Numerical Experiments}\label{sec:num}
We present two classes of games. The first class contains one-dimensional strategy spaces and the second class consists of certain Colonel Blotto games. The equilibrium of each finite subgame is found by solving a linear program. The best responses were computed by selecting the best point of a uniform discretization for the one-dimensional problems and by using a mixed-integer linear programming reformulation for the Colonel Blotto games. The examples were implemented in Python with solvers \texttt{scipy.optimize} and \texttt{mip}. All computations were performed on a laptop with Intel Core i5 CPU and 8GB RAM and no GPU was involved. Randomness is present only in the initialization of one-dimensional examples when a random pair of pure strategies is found.

We compare the double oracle algorithm with fictitious play. Its extension from finite to infinite games was recently formulated in \cite{Ganzfried20}.

\subsection{One-dimensional Examples}
We consider a polynomial game $\calG_1$ from \cite{ParriloIEEE06} with the strategy spaces $X=Y=[-1,1]$ and the utility function
$$
  u_1(x,y)=5xy-2x^2-2xy^2-y.
$$
In the equilibrium, Player 1 has the pure strategy $x^*=0.2$ and Player 2 has the mixed strategy $q^*=0.78\delta_{1} + 0.22\delta_{-1}$. The value of game is $-0.48$. Figure \ref{fig:game1_do_fp} shows the convergence of upper/lower estimates of the value of game. Note that the fictitious play is much slower to converge than the double oracle algorithm.

\begin{figure}[!ht]
  \centering
  \begin{tikzpicture}[scale=0.8]
    \begin{axis}[xlabel={Iteration},ylabel={Value of the game}, legend cell align={left}]
      \addplot [lineA] table[x index=0, y index=1] {\tabAConv};\addlegendentry{Double oracle}
      \addplot [lineB] table[x index=0, y index=3] {\tabAConv};  \addlegendentry{Fictitious play}
      \addplot [lineA] table[x index=0, y index=2] {\tabAConv}; \addplot [lineB] table[x index=0, y index=4] {\tabAConv};                 
    \end{axis}
  \end{tikzpicture}
  \caption{Convergence to the value of game $\calG_1$}
  \label{fig:game1_do_fp}
\end{figure}
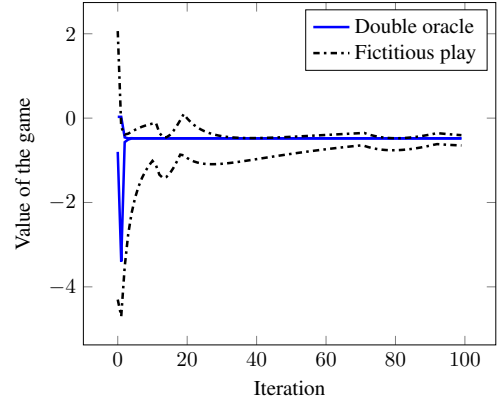

The utility function $u_2$ in our second example (game $\calG_2$) is based on \cite{Townsend}. Specifically, 
$$
  u_2(x,y) = -\cos^2((x-0.1)y) - x\sin(3x+y)
$$
is defined on $X = [-2.25, 2.5]$ and $Y = [-2.5, 1.75]$; see Figure \ref{fig:game6_f}. The convergence to the value is depicted on Figure~\ref{fig:game6_do_fp}. Once again the double oracle algorithm converges fast, while fictitious play is rather slow to converge. In Figure \ref{fig:game6_p} we show the optimal strategies of Player 1. The double oracle algorithm converged to a mixed strategy supported by~four points, the fictitious play seems to reach in limit a~continuous distribution whose peaks are those points. Note that the vertical axis is rescaled to account for the difference between discrete and continuous distributions.

\begin{figure}[!ht]
  \centering
  \includegraphics[width=0.8\linewidth]{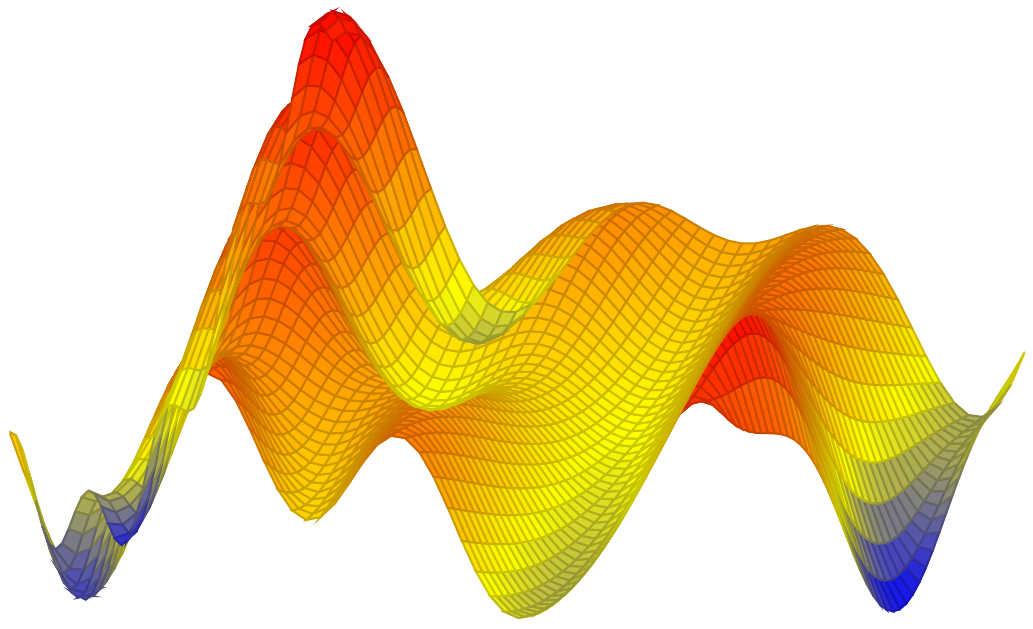}
  \caption{Townsend function $u_2$}
  \label{fig:game6_f}
\end{figure}

\begin{figure}[!ht]
  \centering
  \begin{tikzpicture}[scale=0.8]
    \begin{axis}[xlabel={Iteration},ylabel={Value of the game}, legend cell align={left}]
      \addplot [lineA] table[x index=0, y index=1] {\tabBConv};\addlegendentry{Double oracle}
      \addplot [lineB] table[x index=0, y index=3] {\tabBConv};  \addlegendentry{Fictitious play}
      \addplot [lineA] table[x index=0, y index=2] {\tabBConv}; \addplot [lineB] table[x index=0, y index=4] {\tabBConv};                 
    \end{axis}
  \end{tikzpicture}
  \caption{Convergence to the value of game $\calG_2$}
  \label{fig:game6_do_fp}
\end{figure}

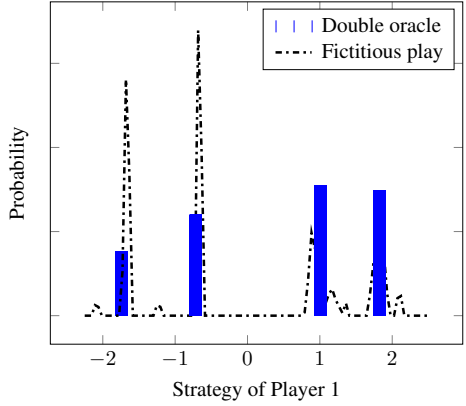
\begin{figure}[!ht]
  \centering
  \begin{tikzpicture}[scale=0.8]
    \begin{axis}[xlabel={Strategy of Player 1}, ylabel=Probability, yticklabel=\empty, legend cell align={left}]
      \addplot [lineC, ycomb] (-1.738,0.151840747);\addlegendentry{Double oracle}
      \addplot [lineB] table[x index=2, y index=3] {\tabBDens};  \addlegendentry{Fictitious play}
      \addplot [lineC, ycomb] (1.824,0.297440389);
      \addplot [lineC, ycomb] (-0.714,0.239661393);
      \addplot [lineC, ycomb] (1.005,0.310546245);
    \end{axis}      
  \end{tikzpicture}
  \caption{Mixed strategies in game $\calG_2$}
  \label{fig:game6_p}
\end{figure}

\subsection{Colonel Blotto Game}
We consider a continuous variant of the Colonel Blotto game. Two players simultaneously allocate forces across $n$ battlefields. Both strategy spaces $X$ and $Y$ equal to
$$
  \left\{\bm x \coloneqq (x^1,\dots,x^n)\in \R^n \mid x^j\ge0, \sum_{j=1}^n x^j=1\right\}.
$$
The utility function of Player $1$,
$$
  u(\bm x,\bm y) \coloneqq \sum_{j=1}^n a^j\cdot l(x^j - y^j),
$$
captures the total excess of the first army over the second army. The result on a battlefield $j$ is $a^j\cdot l(x^j - y^j)$, where $a^j>0$ is a weight of battlefield $j$ and $l(x^j - y^j)$ measures the 
performance of the first army on a battlefield $j$. The standard choice is the signum function $l(z)=\operatorname{sgn}(z)$; see \cite{GrossWagner50} or \cite{Roberson06}. In this paper we assume that each player must allocate a sufficiently higher proportion of forces than the opponent to win the battle on a single battlefield. Namely, we consider
\begin{equation}\label{eq:defin_l}
  l(z) = \begin{cases} -1 &\text{if }z\le -c, \\ \frac 1c z &\text{if }z\in[-c,c], \\ 1 &\text{if }z\ge c, \end{cases}
  \qquad \text{for some $c>0$.}
\end{equation}
When $c\to 0$, we recover the classical infinite colonel Blotto game since \eqref{eq:defin_l} approaches $\operatorname{sgn}(z)$ in the limit.

We will show how to compute best response strategies in case of~\eqref{eq:defin_l}. Assume that Player 2 employs strategies $(\bm y_1,\dots,\bm y_k)$ with  probabilities $(q_1,\dots,q_k)$, where $\bm y_i \coloneqq (y_i^1,\dots,y_i^n)\in Y$.  Then any best response strategy of~Player~$1$ is a solution~to
\begin{equation}\label{eq:opt_problem}
  \max_{\bm x \in X}\; \sum_{i=1}^k q_i \sum_{j=1}^n a^j\cdot l(x^j - y_i^j).  
\end{equation}
Since $l$ is a piecewise affine function, this nonlinear optimization problem can be reformulated as a mixed-integer linear problem. In Appendix \ref{app:br} we derive its equivalent form
$$
\aligned
\max_{\bm x,\bm s,\bm t,\bm z,\bm w}\quad &\sum_{i=1}^k q_i \sum_{j=1}^n a^j\left( s_{ij} - t_{ij} - 1\right) \\
\text{s.t.}\quad &\bm x\in X, \\
&s_{ij}\ge 0,\ s_{ij} \ge \tfrac 1c(x^j - y^j_i + c), \\
&s_{ij}\le \tfrac 1c(x^j - y^j_i + c) + M_l^s(1-z_{ij}),\\
&s_{ij}\le M_u^s z_{ij}, \\ 
&t_{ij}\ge 0,\ t_{ij} \ge \tfrac 1c(x^j - y^j_i - c), \\
&t_{ij}\le \tfrac 1c(x^j - y^j_i - c) + M_l^t(1-w_{ij}),\\
&t_{ij}\le M_u^t w_{ij}, \\
&s_{ij}\in\R,\ t_{ij}\in\R,\ z_{ij}\in\{0,1\},\ w_{ij}\in\{0,1\},
\endaligned
$$
where $
M_l^s = M_u^t= \tfrac 1c-1$ and $M_l^t = M_u^s = \tfrac 1c+1$. The~best response of Player~2 is obtained by solving an analogous MILP. Note that the MILP defined above is necessarily different from the one formulated in \cite{Ganzfried20}.

\begin{figure*}[!ht]
  \centering
  \includegraphics[scale=0.43]{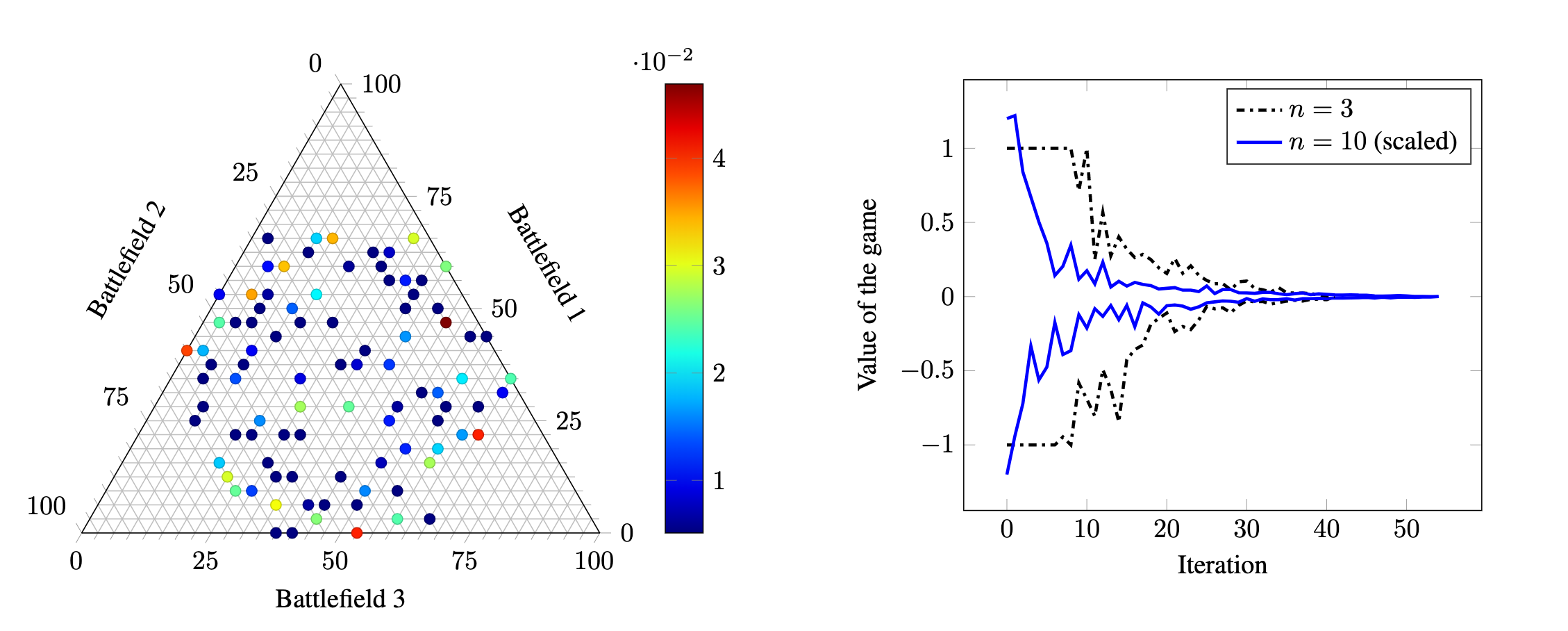}
  \caption{The optimal strategy for $c=\frac{1}{32}$ when started from three corner points (left). The convergence of the double oracle algorithm for $n=3$ and $n=10$ (scaled by $\frac{1}{50}$ for demonstration purposes) battlefields (right).}
  \label{fig:game_bounds}
\end{figure*}

\begin{figure*}[!ht]
  \centering
  \includegraphics[scale=0.4]{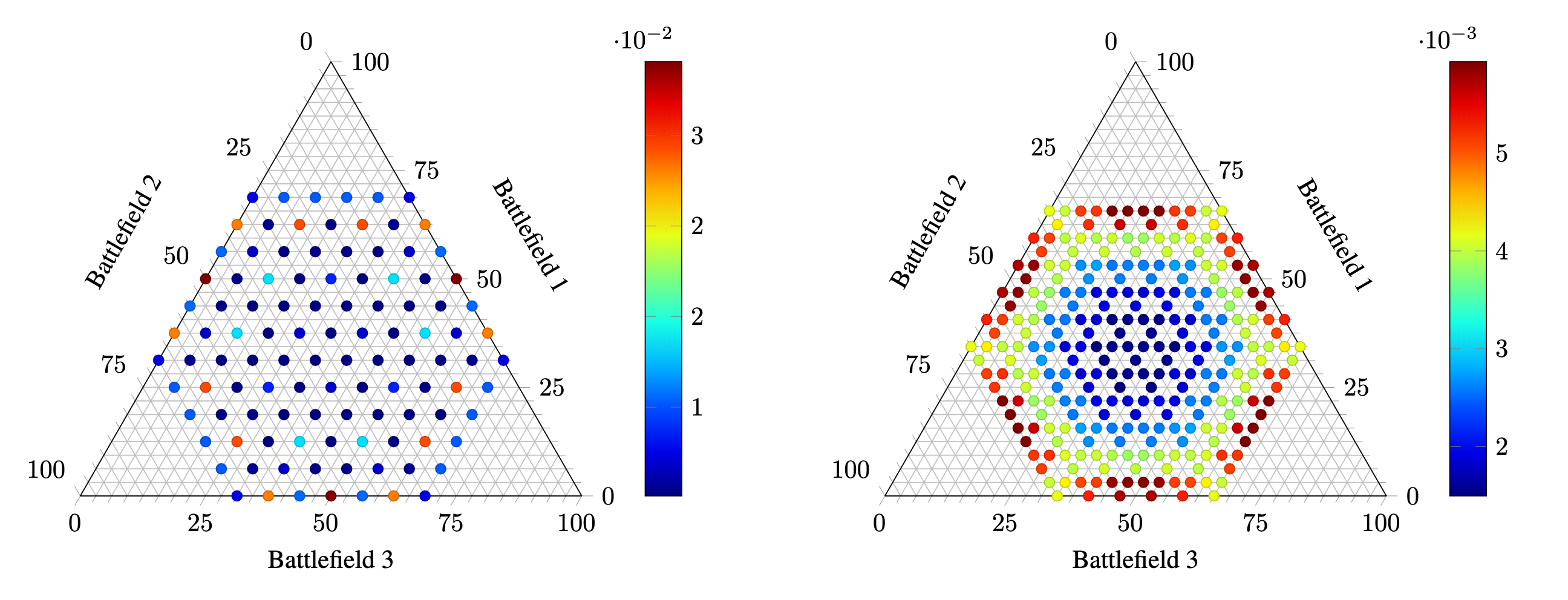}
  \caption{The optimal strategies for $c=\frac{1}{16}$ (left) and $c=\frac{1}{32}$ (right) produced by the double oracle algorithm when started from the grid. Both solutions are symmetric.}
  \label{fig:game_uniform}
\end{figure*}

For the numerical results we consider three battlefields ($n=3$) with equal weights ($\bm a=(1,1,1)$). We observed that the choice of initial strategy sets $X_1$ and $Y_1$ is crucial. Indeed, setting
$$
  X_1 = Y_1 = \{(1,0,0), (0,1,0), (0,0,1)\}
$$
provides much faster convergence than starting from a random point. The reason lies in the left-hand side of Figure~\ref{fig:game_bounds}, which shows the optimal solution produced by the double oracle algorithm for $c=\tfrac{1}{32}$. The optimal strategies are equidistant on the grid with distance $c$. This is a sensible result as the best response of Player 1 to the strategy $(y_1,y_2,y_3)$ of Player 2 is $(y_1+c,y_2+c,y_3-2c)$. Since $X_1$ and $Y_1$ already belong to the grid, all the iterates stay in it. However, they may not converge within this set when initial strategies are chosen at random.

The previous observation inspired us to start with both $X_1$ and $Y_1$ as the whole grid. It turned out that the double oracle converged in one iteration (the initial point was already an~equilibrium) to the strategies depicted in Figure \ref{fig:game_uniform}. The~left-hand side shows the results for $c=\frac{1}{16}$, while the right-hand side corresponds to $c=\frac{1}{32}$. These results are close to the hexagonal solutions obtained in \cite{GrossWagner50} and \cite{Roberson06}.

The right-hand side of Figure \ref{fig:game_bounds} shows the convergence of the double oracle algorithm for $n=3$ with $\bm a=(1,1,1)$ and for $n=10$ with $\bm a=(3,4,\dots,12)$. In both cases we put $c=\frac{1}{16}$. It appears  that the convergence is influenced by~$c$ more than by the number of battlefields $n$.

\section{Conclusions}
We extended the double oracle algorithm from finite to continuous games. We proved that the algorithm recovers a~finitely-supported $\eps$-equilibrium in finitely many iterations and converges to an equilibrium in a possibly infinite number of iterations. We showed that the double oracle algorithm performs better than fictitious play on selected examples. It is evident that the convergence of this algorithm depends on the size of constructed subgames and the best response calculation in each iteration. One of the open problems for future research is to analyze the speed of convergence of the~double oracle algorithm.

\section*{Acknowledgments}
This work was supported by the project RCI (CZ.02.1.01/0.0/0.0/16 019/0000765) ``Research Center for Informatics''. This material is based upon work supported by, or in part by, the Army Research Laboratory and the Army Research Office under grant number W911NF-20-1-0197.

\appendix

\section{Weak Convergence of Measures}\label{app:weak}
We will summarize a necessary background in weak topology on the space of probability measures \cite{Billingsley68}.
A sequence of mixed strategies $(p_i)$ in $\Delta_X$ \emph{weakly converges} to $p\in \Delta_X$ if
\begin{equation*}
  \lim_{i\to \infty} \smallint\nolimits_{X}f(x)\;\mathrm{d}p_i =  \smallint\nolimits_{X}f(x)\;\mathrm{d}p
\end{equation*}
for every continuous function $f\colon X\to \R$, and we denote this by $p_i\Rightarrow p$. Endowed with the topology corresponding to weak convergence, the convex set of mixed strategies $\Delta_X$ is a compact space. Analogously, $\Delta_Y$ becomes a compact set and so is the set $\Delta=\Delta_X\times \Delta_Y$. Then the definition~(\ref{def:U}) warrants that $U$ is a continuous function on $\Delta$. Note that compactness of $\Delta$ and continuity of $U$ imply the existence of all maximizers/minimizers throughout the paper.

\begin{proposition}\label{prop:convergence}
  The space $\Delta$ is weakly sequentially compact, that is, every sequence in $\Delta$ contains a weakly convergent subsequence.
\end{proposition}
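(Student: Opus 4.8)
The plan is to reduce the statement to sequential compactness of each factor $\Delta_X$ and $\Delta_Y$ separately, and then combine the two by a diagonal-subsequence argument. For the single-factor claim I would argue as follows. Since $X\subseteq\R^m$ is compact, it is a compact metric space, so the Banach space $C(X)$ of continuous real-valued functions on $X$ with the supremum norm is separable. By the Riesz representation theorem the assignment $p\mapsto\bigl(f\mapsto\smallint_X f\,\mathrm dp\bigr)$ identifies $\Delta_X$ with the set of positive linear functionals $\Lambda$ on $C(X)$ with $\Lambda(1)=1$; under this identification the topology of weak convergence of measures used in the paper (defined via the test functions $f\in C(X)$) is precisely the restriction of the weak-$*$ topology of $C(X)^*$ to $\Delta_X$.

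Next I would invoke two standard facts. The Banach--Alaoglu theorem gives that the closed unit ball $B$ of $C(X)^*$ is weak-$*$ compact, and separability of $C(X)$ makes $B$ weak-$*$ metrizable; a compact metrizable space is sequentially compact. The image of $\Delta_X$ in $B$ is weak-$*$ closed, since it is carved out by the closed conditions $\Lambda(f)\ge 0$ for each fixed $f\ge 0$ and $\Lambda(1)=1$. Hence $\Delta_X$ is a closed subset of a compact metrizable space, so it is itself compact, metrizable, and therefore sequentially compact. The identical argument applies to $\Delta_Y$. (Equivalently, one may appeal directly to Prokhorov's theorem as in \cite{Billingsley68}: on the compact space $X$ every subset of $\Delta_X$ is tight, so $\Delta_X$ is relatively weakly compact, and being closed it is weakly compact.)

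Finally, given an arbitrary sequence $(p_i,q_i)$ in $\Delta=\Delta_X\times\Delta_Y$, I would first extract, by sequential compactness of $\Delta_X$, a subsequence along which $p_i\Rightarrow p$ for some $p\in\Delta_X$, and then a further subsequence along which $q_i\Rightarrow q$ for some $q\in\Delta_Y$. Along this common subsequence $(p_i,q_i)$ converges to $(p,q)$ in the product topology, which coincides with the weak convergence topology on $\Delta$ since both factors are metrizable. This yields the desired weakly convergent subsequence.

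I do not expect a genuine obstacle here: the argument is an assembly of classical facts. The only points that deserve a sentence of care are the identification of the paper's weak-convergence topology on $\Delta_X$ with the weak-$*$ topology coming from $C(X)^*$, and the use of metrizability to upgrade compactness to sequential compactness and to handle the product; everything else is a direct citation of Riesz representation, Banach--Alaoglu, and separability of $C(X)$ for the compact metric space $X$.
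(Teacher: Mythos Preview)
Your argument is correct. The paper, however, does not actually supply a proof of this proposition: it records it as background in Appendix~A, immediately after asserting (with a reference to \cite{Billingsley68}) that $\Delta_X$, $\Delta_Y$ and $\Delta$ are compact in the weak topology. So there is nothing to compare against beyond that citation.

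Relative to what the paper takes for granted, your write-up is more detailed than necessary but in a useful way. The paper already grants compactness of $\Delta_X$ and $\Delta_Y$; the only additional step needed for the stated proposition is to pass from compactness to \emph{sequential} compactness, and for that your observation that $C(X)$ is separable (since $X$ is a compact metric space), hence the weak-$*$ topology on the unit ball of $C(X)^*$ is metrizable, is exactly the right tool. Your diagonal-subsequence argument for the product is fine; alternatively, the product of two compact metrizable spaces is again compact metrizable, so sequential compactness of $\Delta$ follows in one stroke. Your parenthetical appeal to Prokhorov's theorem is the route closest in spirit to the paper's citation of \cite{Billingsley68}.
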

\noindent  Since $U$ is continuous, the definition of weak convergence immediately implies the following two statements:
\begin{itemize}
  \item If $p_i\Rightarrow p$ in $\Delta_X$ and $q_i\Rightarrow q$ in $\Delta_Y$, then
  \begin{equation}\label{eq:corv2}
    U(p_i,q_i) \to U(p,q).
  \end{equation}
  \item If $p_i\Rightarrow p$ in $\Delta_X$ and $y_i\to y$ in $Y$, then
  \begin{equation}\label{eq:corv1}
    U(p_i,y_i) \to U(p,y).
  \end{equation}
\end{itemize}
Finally, it can be shown that the optimal value of utility function in response to the opponent's mixed strategy is attained for some pure strategy. 

\begin{proposition}\label{prop:switch}
For any $p\in\Delta_X$ we have
\begin{equation*}
  \aligned
  \min_{y\in Y} U(p,y) = \min_{q\in \Delta_Y} U(p,q).
  \endaligned
\end{equation*}
\end{proposition}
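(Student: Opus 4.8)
The plan is to establish the two inequalities between $v_{\mathrm{pure}} \coloneqq \min_{y \in Y} U(p,y)$ and $v_{\mathrm{mix}} \coloneqq \min_{q \in \Delta_Y} U(p,q)$ separately. Both minima are attained: $Y$ is compact and $\Delta_Y$ is weakly compact (Appendix \ref{app:weak}), while the maps $y \mapsto U(p,y)$ and $q \mapsto U(p,q)$ are continuous, the latter by the remarks in Appendix \ref{app:weak} and the former because, by uniform continuity of $u$ on the compact set $X \times Y$, one has $|U(p,y) - U(p,y')| \le \sup_{x \in X} |u(x,y) - u(x,y')| \to 0$ as $y' \to y$.

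For $v_{\mathrm{mix}} \le v_{\mathrm{pure}}$, note that $\{\delta_y \mid y \in Y\} \subseteq \Delta_Y$ and $U(p,\delta_y) = U(p,y)$ by the definition \eqref{def:U}. Minimizing over a subset cannot decrease the value, so $v_{\mathrm{mix}} \le \min_{y\in Y} U(p,\delta_y) = v_{\mathrm{pure}}$.

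For the reverse inequality, fix an arbitrary $q \in \Delta_Y$. Since $u$ is continuous on the compact set $X \times Y$ it is bounded, so Fubini's theorem applies to the finite product measure $p \times q$ and gives
\[
  U(p,q) = \smallint_{X\times Y} u(x,y)\,\mathrm{d}(p\times q) = \smallint_{Y} \Bigl( \smallint_{X} u(x,y)\,\mathrm{d}p(x) \Bigr)\,\mathrm{d}q(y) = \smallint_{Y} U(p,y)\,\mathrm{d}q(y).
\]
Because $U(p,y) \ge v_{\mathrm{pure}}$ for every $y \in Y$ and $q$ is a probability measure, integrating this bound yields $U(p,q) \ge v_{\mathrm{pure}}$. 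As $q \in \Delta_Y$ was arbitrary, $v_{\mathrm{mix}} \ge v_{\mathrm{pure}}$, and together with the previous paragraph this proves the equality.

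The only technical ingredients are the continuity of the marginal $y \mapsto U(p,y)$ and the interchange of the order of integration; neither is expected to be a real obstacle, the former being a routine consequence of uniform continuity on a compact set and the latter being precisely Fubini's theorem for a bounded measurable integrand against a finite product measure. Everything else reduces to the elementary observation that an average of a function is at least its minimum.
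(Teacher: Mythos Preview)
Your argument is correct. The paper itself does not supply a proof of Proposition~\ref{prop:switch}; it merely states in Appendix~\ref{app:weak} that ``it can be shown that the optimal value of utility function in response to the opponent's mixed strategy is attained for some pure strategy'' and then records the proposition. Your proof fills this gap with the standard approach: one inequality is trivial from $\{\delta_y\}\subseteq\Delta_Y$, and the other follows from writing $U(p,q)=\int_Y U(p,y)\,\mathrm{d}q(y)$ via Fubini and bounding the integrand below by $\min_{y\in Y}U(p,y)$. One minor remark: you could shorten the continuity verification for $y\mapsto U(p,y)$ by noting that $y_n\to y$ implies $\delta_{y_n}\Rightarrow\delta_y$, whence $U(p,y_n)\to U(p,y)$ by the continuity of $U$ on $\Delta$ already asserted in Appendix~\ref{app:weak}; but your uniform-continuity argument is of course fine as well.
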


\section{Proofs and Additional Results}\label{app:proofs}

\begin{proof}[Proof of Proposition \ref{prop:epsilonNE}]
  The existence of an $\eps$-equilibrium follows from Theorem \ref{thm:convergence}.
  To prove the second part, assume that $(p^*,q^*)$ is an $\eps$-equilibrium. Then \eqref{NE_eps} implies
  \begin{equation}\label{eq:eps_equiv}
    \max_{p\in\Delta_X}U(p,q^*)-\epsilon \le U(p^*,q^*) \le \min_{q\in\Delta_Y} U(p^*,q)+\epsilon.
  \end{equation}
Let $(\hat p, \hat q)$ be an equilibrium of $\calG$. Then 
  $$
    U(\hat p,\hat q) \le U(\hat p, q^*) \le \max_{p\in\Delta_X} U(p,q^*)\le U(p^*,q^*)+\eps,
  $$
  where the first inequality follows from \eqref{NE} and the third from~\eqref{eq:eps_equiv}. In a similar way, we can show
  $$
    U(\hat p,\hat q) \ge U(p^*, \hat q) \ge \min_{q\in\Delta_Y} U(p^*,q)\ge U(p^*,q^*)-\eps.
  $$
  Combining these two relations with $U(\hat p,\hat q)=v(\mathcal G)$ imply the second statement of Proposition \ref{prop:epsilonNE}.
\end{proof}

\begin{lemma}\label{lemma:do_stop}
  Assume $X_{i+1}=X_i$ and $Y_{i+1}=Y_i$ in some step $i$ of Algorithm \ref{alg:do}. Then $U(p_i^*,y_{i+1})=U(x_{i+1},q_i^*)$.
\end{lemma}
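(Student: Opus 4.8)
The plan is to play the two obvious chains of inequalities against each other: the one coming from $(p_i^*,q_i^*)$ being an equilibrium of the \emph{subgame} $(X_i,Y_i,u)$, and the one recorded in~\eqref{ineq:LU} coming from $x_{i+1},y_{i+1}$ being best responses in the \emph{full} game.

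First I would unpack the hypothesis: $X_{i+1}=X_i$ together with $Y_{i+1}=Y_i$ simply means $x_{i+1}\in X_i$ and $y_{i+1}\in Y_i$. Next, since $(p_i^*,q_i^*)$ is an equilibrium of the subgame $(X_i,Y_i,u)$, Proposition~\ref{prop:NE} (item~2, applied inside that subgame) gives $U(x,q_i^*)\le U(p_i^*,q_i^*)\le U(p_i^*,y)$ for all $x\in X_i$ and $y\in Y_i$. Instantiating $x=x_{i+1}$ and $y=y_{i+1}$, which is legitimate by the previous sentence, yields
$$
U(x_{i+1},q_i^*)\le U(p_i^*,q_i^*)\le U(p_i^*,y_{i+1}).
$$

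On the other hand, $\underline v_i=U(p_i^*,y_{i+1})$ and $\overline v_i=U(x_{i+1},q_i^*)$ by definition, so inequality~\eqref{ineq:LU} reads $U(p_i^*,y_{i+1})\le U(p_i^*,q_i^*)\le U(x_{i+1},q_i^*)$, i.e. exactly the reverse chain. Concatenating the two chains forces equality throughout, and in particular $U(p_i^*,y_{i+1})=U(p_i^*,q_i^*)=U(x_{i+1},q_i^*)$, which is the claim.

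There is essentially no hard step here; the only point that needs a little care is keeping track of \emph{which} game each extremal property refers to --- the equilibrium identity is used for the finite subgame $(X_i,Y_i,u)$, whereas the best-response inequalities~\eqref{ineq:LU} are statements about maxima/minima over all of $X$ and $Y$ --- and noticing that the hypothesis $X_{i+1}=X_i$, $Y_{i+1}=Y_i$ is precisely what lets us feed $x_{i+1},y_{i+1}$ into the subgame equilibrium condition.
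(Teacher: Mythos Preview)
Your argument is correct and is essentially the paper's own proof: both hinge on $x_{i+1}\in X_i$, $y_{i+1}\in Y_i$ together with the subgame-equilibrium property of $(p_i^*,q_i^*)$ and the full-game best-response property of $x_{i+1},y_{i+1}$. The only cosmetic difference is that the paper writes a single chain of equalities $U(p_i^*,q_i^*)=\max_{x\in X_i}U(x,q_i^*)=\max_{x\in X}U(x,q_i^*)=U(x_{i+1},q_i^*)$ (and symmetrically for $y$), whereas you obtain the same conclusion by sandwiching $U(p_i^*,q_i^*)$ between the subgame inequalities and the already-recorded chain~\eqref{ineq:LU}.
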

\begin{proof}
  The condition $X_{i+1}=X_i$ implies $x_{i+1}\in X_i$. Then
  $$
  \aligned
  U(p_i^*,q_i^*) = \max_{x\in X_i}U(x,q_i^*) = \max_{x\in X}U(x,q_i^*) = U(x_{i+1},q_i^*),
  \endaligned
  $$
  where the first equality follows from Proposition \ref{prop:NE} applied to the subgame $(X_i,Y_i,u)$, the second from $x_{i+1}\in X_i$, and the third from the definition of iterate $x_{i+1}$.

  Similarly, we can show $U(p_i^*,q_i^*) = U(p_i^*,y_{i+1})$,
  which means $U(p_i^*,y_{i+1})=U(x_{i+1},q_i^*)$.
\end{proof}

\begin{lemma}\label{lemma:lower_upper}
  The inequality
  $$
  \underline v_i \le v(\mathcal G) \le \overline v_i
  $$
  holds in every step $i$ of Algorithm \ref{alg:do}.
\end{lemma}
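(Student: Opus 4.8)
The plan is to recognize $\underline v_i$ and $\overline v_i$ as one-sided optimizations that are sandwiched between the lower and upper values of $\calG$, which coincide by Proposition \ref{prop:NE}. First I would rewrite the two quantities output by the algorithm. By the choice $y_{i+1}\in\beta_2(p_i^*)$ and Proposition \ref{prop:switch},
$$
\underline v_i = U(p_i^*,y_{i+1}) = \min_{y\in Y} U(p_i^*,y) = \min_{q\in\Delta_Y} U(p_i^*,q),
$$
and symmetrically, since $x_{i+1}\in\beta_1(q_i^*)$,
$$
\overline v_i = U(x_{i+1},q_i^*) = \max_{x\in X} U(x,q_i^*) = \max_{p\in\Delta_X} U(p,q_i^*).
$$

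Next I would compare these with the value. For the lower bound, $p_i^*\in\Delta_X$ is just one particular strategy, so
$$
\underline v_i = \min_{q\in\Delta_Y} U(p_i^*,q) \le \max_{p\in\Delta_X}\ \min_{q\in\Delta_Y} U(p,q) = \underline v(\calG) = v(\calG).
$$
For the upper bound, $q_i^*\in\Delta_Y$ is one particular strategy, so
$$
\overline v_i = \max_{p\in\Delta_X} U(p,q_i^*) \ge \min_{q\in\Delta_Y}\ \max_{p\in\Delta_X} U(p,q) = \overline v(\calG) = v(\calG).
$$
Here I use that $\underline v(\calG)=\overline v(\calG)=v(\calG)$, which is the remark following Proposition \ref{prop:NE}. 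Combining the two displays gives $\underline v_i\le v(\calG)\le\overline v_i$, as claimed.

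There is no real obstacle: the only point requiring care is the passage from pure to mixed best responses (invoking Proposition \ref{prop:switch}) so that $\underline v_i$ and $\overline v_i$ are expressed as optima over $\Delta_Y$ and $\Delta_X$ rather than over $Y$ and $X$; once that is in place the estimates are immediate from the definitions of $\underline v(\calG)$ and $\overline v(\calG)$. Note in particular that nothing about the subgame equilibrium $(p_i^*,q_i^*)$ is needed beyond the membership $p_i^*\in\Delta_X$, $q_i^*\in\Delta_Y$, so the bound holds at every iteration regardless of how the strategy sets $X_i,Y_i$ were built.
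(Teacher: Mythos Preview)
Your proof is correct and essentially the same as the paper's: both rewrite $\underline v_i=\min_{q\in\Delta_Y}U(p_i^*,q)$ and $\overline v_i=\max_{p\in\Delta_X}U(p,q_i^*)$ via the best-response definitions and Proposition~\ref{prop:switch}, then compare with the value. The only cosmetic difference is that the paper fixes an equilibrium $(p^*,q^*)$ of $\calG$ and bounds $\min_{q}U(p_i^*,q)\le U(p_i^*,q^*)\le U(p^*,q^*)=v(\calG)$, whereas you bound directly by $\max_p\min_q U(p,q)=\underline v(\calG)=v(\calG)$; these amount to the same thing.
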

\begin{proof}
  Let $(p^*,q^*)$ be an equilibrium of $\calG$. Then 
  $$
    \aligned
    \underline v_i &= U(p_i^*,y_{i+1}) = \min_{y\in Y} U(p_i^*, y) = \min_{q\in\Delta_Y} U(p_i^*, q) \\
    &\le U(p_i^*,q^*) \le U(p^*,q^*) = v(\mathcal G).
    \endaligned
  $$
  The second inequality can be obtained analogously.
\end{proof}

\begin{example}\label{example1}
  Define $X:=[0,1]$, $Y:=[0,1]$, and consider an arbitrary continuous function $u:X\times Y\to\R$ for which the double oracle algorithm produces an infinite number of iterates $(x_1,y_1),(x_2,y_2),\dots$ for $\eps=0$. Further, put $\tilde X:=[0,1]\cup [2,3]$ and let $\tilde u:\tilde X\times Y\to\R$ be given by
  $$
  \tilde u(x,y) = \begin{cases} u(x,y) &\text{if } x\in[0,1], \\ u(x-2,y) &\text{if } x\in[2,3]. \end{cases}
  $$
  Since $u$ is continuous, $(\tilde X, Y, \tilde u)$ is a continuous game. Since $\tilde u(x,y)=\tilde u(x+2,y)$, the extrema of marginal functions are not unique. Considering $\tilde y_i=y_i$, the double oracle algorithm may produce the sequence of iterations
  $$
  \tilde x_i = \begin{cases} x_i &\text{if } i\text{ is odd,}\\ x_i+2 &\text{if } i\text{ is even.}\end{cases}
  $$
  This sequence is obviously not convergent. However, there exists a convergent subsequence and its limit is an equilibrium by Theorem \ref{thm:convergence}.
\end{example}

\section{Best Response for Colonel Blotto Game}\label{app:br}

Function $l$ from \eqref{eq:defin_l} can be written as
$$
  l(z) = \max\left\{\tfrac 1c(z+c), 0\right\} - \max\left\{\tfrac 1c(z-c), 0\right\} - 1.
$$
With each $i,j$ in~\eqref{eq:opt_problem} we associate auxiliary variables $s_{ij}$ and~$t_{ij}$ and the contraints ensuring $l(x^j-y^j_i)=s_{ij}-t_{ij}-1$. The constraints on $s_{ij}$ and $t_{ij}$ follow from Lemma~\ref{lemma:mip}.

\begin{lemma}\label{lemma:mip}
  Let $a>0$, $b\in\R$, $M_l>0$, $M_u>0$ and $f(x) \coloneqq \max\{a(x-b), 0\}$. 
  For every $x$ such that $a(x-b) \in [-M_l,M_u]$ there are a unique $s\in\R$ and a possibly non-unique $z\in\{0,1\}$ solving the system
  $$
    \aligned
    s&\ge 0,  & s&\le a(x-b) + M_l(1-z),  \\
    s&\ge a(x-b), & s&\le M_uz.
    \endaligned
  $$
  Moreover, it holds $f(x)=s$.
\end{lemma}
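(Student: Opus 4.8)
The plan is to fix $x$ satisfying the hypothesis and set $\xi \coloneqq a(x-b) \in [-M_l, M_u]$, so that the whole system becomes a statement about the single scalar $\xi$ together with the binary choice of $z$. First I would rewrite the two upper-bound constraints separately for the two possible values of $z$: when $z = 1$ they read $s \le \xi$ and $s \le M_u$, and when $z = 0$ they read $s \le \xi + M_l$ and $s \le 0$. Combined with the lower bounds $s \ge 0$ and $s \ge \xi$, each choice of $z$ determines $s$ uniquely (when feasible): $z = 1$ forces $s = \xi$, and $z = 0$ forces $s = 0$.

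Then I would split on the sign of $\xi$. If $\xi > 0$: the choice $z = 0$ is infeasible, since it would require simultaneously $s = 0$ and $s \ge \xi > 0$; hence $z = 1$ is forced, giving $s = \xi$, and this is feasible because $0 \le \xi \le M_u$. If $\xi < 0$: the choice $z = 1$ is infeasible, since it would require $s = \xi < 0$ against $s \ge 0$; hence $z = 0$ is forced, giving $s = 0$, which is feasible because $0 \le \xi + M_l$ (i.e. $\xi \ge -M_l$). If $\xi = 0$: both values of $z$ are feasible and both yield $s = 0$. In every case the unique value of $s$ equals $\max\{\xi,0\} = \max\{a(x-b),0\} = f(x)$, while $z$ is non-unique exactly in the boundary case $\xi = 0$.

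There is no genuine obstacle; the argument is a routine case analysis. The one point deserving attention is that the interval assumption $a(x-b) \in [-M_l, M_u]$ is precisely what is needed for the ``slack'' constraints (namely $s \le M_u z$ in the case $\xi \ge 0$, and $s \le \xi + M_l(1-z)$ in the case $\xi \le 0$) to be consistent with the value of $s$ that the remaining constraints force, so that a solution actually exists. I would state this explicitly when verifying feasibility in each branch.
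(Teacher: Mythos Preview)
Your proposal is correct and follows essentially the same approach as the paper: a case split on the sign of $a(x-b)$ showing that $z=1$ is forced (giving $s=a(x-b)$) when this quantity is positive, $z=0$ is forced (giving $s=0$) when it is negative, and both choices of $z$ yield $s=0$ when it is zero. Your write-up is in fact more explicit than the paper's, which simply records the implications and invokes the big-$M$ method without spelling out the feasibility checks.
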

\begin{proof}
  The proof is based on the well-known big-M method for the deactivation of constraints. The claim follows from the following implications,
  $$
  \aligned
  a(x-b) < 0 &\implies z=0 \implies s=0, \\
  a(x-b) > 0 &\implies z=1 \implies s=a(x-b).
  \endaligned
  $$
  If $a(x-b)=0$, then $s=0$ is unique, whereas $z$ may have either value.
\end{proof}
\noindent
Since $x^j,y^j_i\in[0,1]$, we have 
$$
  \aligned
  \tfrac 1c(x^j - y^j_i + c) \in [-\tfrac 1c+1, \tfrac1c+1], \\
  \tfrac 1c(x^j - y^j_i - c) \in [-\tfrac 1c-1, \tfrac1c-1],
  \endaligned
$$
which gives the bounds in Lemma~\ref{lemma:mip}.

\end{document}